\documentclass[11pt]{article}
\usepackage{rotating,multirow,amsmath,amssymb,fullpage}


\usepackage{natbib}
\usepackage{setspace}
\bibpunct[, ]{(}{)}{,}{a}{}{,}%

\newtheorem{definition}{Definition}
\newtheorem{theorem}{Theorem}
\newtheorem{lemma}{Lemma}

\newenvironment{proof}[1][Proof]{\textbf{#1.} }{\ \rule{0.5em}{0.5em}}
\setlength{\textwidth}{6.2in}
\setlength{\oddsidemargin}{0in}
\setlength{\evensidemargin}{0in}

\begin{document}
	\title{Search and Rescue in the Face of Uncertain Threats}

	\author{
		Thomas Lidbetter$^{a}$\\
		{\em $^{a}$Department of Management Science and Information Systems,} \\
		{\em Rutgers Business School, Newark, NJ 07102, USA}\\
		{\em tlidbetter@business.rutgers.edu}
	}
	
	\maketitle
\onehalfspacing

\begin{abstract} \noindent
We consider a search problem in which one or more targets must be rescued by a search party, or {\em Searcher}. The targets may be survivors of some natural disaster, or prisoners held by an adversary. The targets are hidden among a finite set of locations, but when a location is searched, there is a known probability that the search will come to an end, perhaps because the Searcher becomes trapped herself, or is captured by the adversary. If this happens before all the targets have been recovered, then the rescue attempt is deemed a failure. The objective is to find the search that maximizes the probability of recovering all the targets. We present and solve a game theoretic model for this problem, by placing it in a more general framework that encompasses another game previously introduced by the author. We also consider an extension to the game in which the targets are hidden on the vertices of a graph. In the case that there is only one target, we give a solution of the game played on a tree.
\end{abstract}

{\bf Keywords:} game theory; search games; search and rescue; trees

\newpage 

\section{Introduction}

Many search and rescue operations may be dangerous for the search party, or {\em Searcher}. For example, searching in unstable buildings for earthquake survivors, searching for lost miners in a cave system, or performing a military rescue operation. One or more targets must be rescued, but there is a danger that the search is cut short due to one of these threats. For example, the Searcher is captured herself (in the case of a military operation), or to some incident caused by Nature results in the search being terminated (such as the Searcher becoming trapped herself).

We model such operations by introducing a new search model. We assume that a known number of targets are located among a finite set of possible locations, and at each location there is a given probability (which may depend on the location) that when that location is searched, the search will come to an end. The objective is to choose which order to search the locations to maximize the probability that the targets are all rescued.

We do not make any assumptions on the probability distribution with which the targets are located, but rather seek to find the randomized search that maximizes the probability of success in the worst case. Equivalently, we study a zero-sum game between the Searcher and a {\em Hider} who chooses where the targets are hidden. The latter way of framing the problem is particular appropriate for military applications. We call this game the {\em search and rescue game}, and it lies in the field of {\em search games}. For good general overviews of the literature on search games, see \cite{AG03}, \cite{Gal11} or \cite{H16}.

We begin by defining the game precisely in Section~\ref{sec:prelim}, in which we also point out a relation of the game to a scheduling problem introduced by \cite{Agnetis}. In Section~\ref{sec:gen}, we give a solution to the game. It turns out that the solution can be found by a similar method to that of the game introduced by the author in \cite{Lidbetter13}. The game studied in the latter paper also involves a known number of targets located among a finite number of locations, but there is no danger that the Searcher will be captured herself. Instead, there is a cost associated with searching each location, and the objective is to minimize the total cost of finding all the targets. Despite their similarities, the game of~\cite{Lidbetter13} and the game of this paper do not appear to be equivalent, so we unify them in a more general framework, simultaneously giving solutions to both.

In Section~\ref{sec:graph} we extend the game so that it is played on a graph. More precisely, we assume the hiding locations are vertices of the graph, and the Searcher must begin her search at a given vertex. We assume that the vertices of the graph must be searched according to an {\em expanding search}, a search paradigm introduced by the author in~\cite{AL13}. Roughly speaking, an expanding search of a graph is an ordering of the vertices such that each vertex is adjacent to some previously chosen vertex. In the context of the search and rescue game, this model of search is appropriate for situations in which locations must be searched contiguously, and after each one has been successfully searched, it can be marked as ``secure'', so that there is no danger of capture if the Searcher revisits it. The expanding search paradigm has been used more recently in \cite{ALD19} and \cite{AL18}.

We give a solution to the search and rescue game played on a tree in the case that there is only one target, by giving a recursive method for calculating optimal strategies and the value of the game. A similar approach was taken by \cite{Alpern10}, who found the solution to a different search game on a tree. In Alpern's game, the Searcher walks on a tree with the aim of minimizing the time taken to find a target, and the time taken to traverse an arc depends on the direction of travel.

This work takes a different approach to much of the literature on search games, which have the objective of minimizing some cost incurred in finding one or more target. In addition to the papers on search games already cited, we finish this section by briefly discussing some other recent work that takes a cost minimizing approach. 

The classic model of network search games, as studied in \cite{Gal79} and \cite{Gal01}, assumes that a Searcher, beginning at a fixed point of a network, wishes to find a immobile Hider located on the network in minimal time. This framework was extended in \cite{DG08} and \cite{ABG08} to allow an arbitrary starting point for the Searcher, and in~\cite{Alpern19} to restrict both the Searcher's starting point and the Hider's hiding point to a fixed subset of the network. ~\cite{AL15} considered a model of network search in which the Searcher has two speeds of travel: a slow speed at which she can detect the object when she passes it, and a fast speed at which she cannot. \cite{LidbetterEJOR} evaluated the performance of the Searcher strategy known as the {\em Random Chinese Postman Tour} in the classic network search model of \cite{Gal79}. \cite{BK13,BK15} consider a more general search game on a network where the vertices may have {\em search costs} that the Searcher has to pay to search them.

There are some instances of search games being considered that do not take a cost minimizing approach. For example \cite{LS16} consider a game in which a Searcher wishes to maximize the probability of finding a target before an unknown deadline; \cite{GC14} introduced a predator-prey game in which a predator wishes to maximize the probability of capturing the prey.

This is the first paper, as far as we are aware, to consider a search game that models a scenario in which the Searcher may be captured herself.

\section{Preliminaries} \label{sec:prelim}

We now formally define the search and rescue game. A set of $k$ targets must be rescued from a set of locations $S \equiv \{1,\ldots,n\}$, where $1 \le k \le n-1$.  The locations must be searched sequentially until all the targets have been found. If a location $i$ is searched, there is a probability $p_i \in (0,1)$ that the search is successful and all targets located there will be found. This probability is independent of where the location appears in the sequence. With probability $1-p_i$, the Searcher will be captured herself, and no more locations can be searched. Note that we disallow $p_i=0$ because the Hider could place a target in any such location $i$ so that the value of the game is $0$, and we disallow $p_i=1$ because it is trivially the case that such a location $i$ should be searched first.

Formally, a strategy for the Searcher is an ordering $\sigma:S \rightarrow S$, so that $\sigma(i)$ is the hiding location in the $i$th position in the ordering. We refer to an ordering $\sigma$ as a {\em search}. For the Hider, note that strategies in which more than one target is hidden in the same location are dominated, so we take the Hider's strategy set to be all subsets $H \in S^{(k)} \equiv \{A \subseteq S: |A|=k\}$. 

In order to define the payoff, first note that the probability the Searcher will be not be captured while searching a subset $A$ of hiding locations is 
\[
f(A) \equiv \prod_{i \in A}p_i.
\]
For a given Searcher strategy $\sigma$, let
\[
S^{\sigma}_i = \cup \{j \in S: \sigma^{-1}(j) \le i\}
\] 
be the first $i$ locations searched. Then for a given Hider strategy $H$ and Searcher strategy $\sigma$, the payoff $P(H,\sigma)$ of the game is $f(S^{\sigma}_i)$, where $i$ is minimal such that $H \subseteq S^{\sigma}_i$. That is, $P(H, \sigma)$ is the probability that the Searcher will not be captured by the time she finds all the targets. 

The Searcher's objective is to maximize the payoff, and the Hider's is to minimize it. This is a finite zero-sum game, so, by the minimax theorem of \cite{Neumann} for zero-sum games, it has optimal (max-min) mixed strategies and a value. A mixed strategy $s$ for the Searcher is a probability distribution over all searches of $S$, and a mixed strategy $h$ for the Hider is a probability distribution over $S^{(k)}$. For a mixed Hider strategy $h$ and a mixed Searcher strategy $s$, we denote the expected payoff of the game by $P(h,s)$.

\subsection{Relation to Unreliable Job Sequencing}

Consider the search and rescue game in the case that $k=1$. In this case, a mixed strategy for the Hider is a probability distribution over the set $S$ of locations. Such a strategy can be described by a vector of probabilities $\mathbf x \in \mathbb{R}^n$ with $\sum_i x_i = 1$, where $x_i$ is the probability that the Hider is in location $i$. Then for a given Searcher strategy $\sigma$, the probability $P(\mathbf x, \sigma)$ the target is rescued when the Hider uses some mixed strategy $\mathbf x$ is given by
\begin{align}
P(\mathbf x, \sigma) = x_{\sigma(1)}p_{\sigma(1)}+ x_{\sigma(2)}p_{\sigma(1)}p_{\sigma(2)}+ \cdots + x_{\sigma(n)} p_{\sigma(1)} \ldots p_{\sigma(n)}. \label{eq:scheduling}
\end{align}
We call the problem of choosing $\sigma$ to maximize $P(\mathbf x, \sigma)$ the {\em best response problem}, which is distinct from the problem of finding an optimal mixed strategy for the Searcher in the game. This problem has been considered by \cite{Agnetis} in the context of machine scheduling. Here, the problem is that $n$ jobs must be processed by a machine, and a reward of $x_i$ is obtained after successfully processing job $i$. The jobs are processed sequentially, and the probability that a job $i$ is successfully processed is $p_i$. Otherwise, with probability $1-p_i$, job $i$ fails, and no more jobs can be processed. The expected reward for a given ordering of the jobs is equal to $P(\mathbf x,\sigma)$.

Although \cite{Agnetis} consider the more general problem where the jobs are sequenced by multiple machines, they show that if there is only one machine, the optimal policy is given by ordering the jobs in non-increasing order of the index $p_ix_i/(1-p_i)$.

\cite{Agnetis} also point out a connection between their problem and a classic problem of \cite{Monma} in which $n$ components have to be sequentially tested until either a component fails or all the components pass the tests. The cost of testing component $i$ is $c_i$ and the probability it passes the test is $q_i$. \cite{Agnetis} show that with $p_i =q_i$ and $x_i = c_i/q_i$, this problem is equivalent to choosing an ordering to {\em minimize} the expression on the right-hand side of~(\ref{eq:scheduling}) (as opposed to their problem, which is to maximize it). The solution is to order $S$ in non-{\em decreasing} order of the index $x_ip_i/(1-p_i)$, rather than non-increasing order.

\section{A More General Search Game} \label{sec:gen}

We now place the search and rescue game in a more general context by defining a broader class of search games between a Hider and a Searcher. As before, the game is played on a set $S$ of locations, and this time $f:2^S \rightarrow \mathbb R$ is an arbitrary set function. We view $f$ as a reward function, and we assume that the values $f(A)$ are given by an oracle. The Hider's strategy set is all subsets $H \in S^{(k)}$, for some $k$, and the Searcher's strategy set is all searches (or orderings) of $S$. 

As before, for a given Hider strategy $H$ and Searcher strategy $\sigma$, the payoff $P=P_f$ of the game is given by $P(H, \sigma) = f(S^{\sigma}_i)$, where $i$ is minimal such that $H \subseteq S^{\sigma}_i$. The Searcher is the maximizer and the Hider the minimizer. We will denote this game by $\Gamma_f$.

We give a sufficient condition on $f$ for which $\Gamma_f$ has a simple closed-form solution. For $i \in S$ and $A \subseteq S$, let $f_A(i) = f(A \cup i) - f(A)$. (For brevity, we write $A \cup i$ for $A \cup \{i\}$ and $f(i)$ for $f(\{i\})$.)

\begin{definition} Let $f: 2^S \rightarrow \mathbb R$ be positive and $f(A)<f(B)$ for $B \subset A$. If there is a $\mathbf z \in \mathbb R^n$ with $z_r >0$ for all $r$, such that
\begin{align}
\frac{f_{A \cup j}(i)}{f_{A \cup i}(j)} = \frac{z_i}{z_j} \text{ when } i \notin A \text{ and } j \notin A, \label{eq:index}
\end{align}
then $f$ is {\em $\mathbf z$-indexable}.
\end{definition}

Note that the $f$ being strictly decreasing implies that $f_{A \cup j}(i) < 0$ for all $i,j \notin A$, so that the left-hand side of (\ref{eq:index}) is well-defined and positive.

The term ``indexable'' is inspired by the use of the term in \cite{Bertsimas} to describe dynamic and stochastic scheduling problems whose solution is given by assigning an index to each job and, at every stage, choosing the job with the highest index. 

We first observe that if $f$ is indexable, then for $k=1$, the best response problem for the $\Gamma_f$ is indexable, in the sense of \cite{Bertsimas}.

\begin{theorem} \label{thm:BR}
Suppose $f$ is $\mathbf z$-indexable, and consider a mixed Hider strategy $x \in \mathbb R^n$. The solution to the best response problem of $\Gamma_f$ is to search the elements of $S$ in non-increasing order of the index $x_i/z_i$.
\end{theorem}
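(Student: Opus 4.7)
The natural approach is a pairwise interchange (adjacent-swap) argument, which is standard for scheduling-type indexability results of this kind.

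First I would write the expected payoff explicitly. For $k=1$, a Hider mixed strategy $x$ assigns probability $x_j$ to location $j$, and the payoff against $\sigma$ satisfies
\[
P(x,\sigma) = \sum_{k=1}^n x_{\sigma(k)}\, f(S^\sigma_k),
\]
since against the pure strategy $\{j\}$ the payoff is $f(S^\sigma_{\sigma^{-1}(j)})$.

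Next I would fix any search $\sigma$, pick two consecutive positions with $\sigma(k)=i$, $\sigma(k+1)=j$, and let $\sigma'$ be the search obtained by swapping these two entries. Writing $A = S^\sigma_{k-1} = S^{\sigma'}_{k-1}$, all terms except the $k$-th and $(k+1)$-st cancel in $P(x,\sigma) - P(x,\sigma')$, leaving
\[
P(x,\sigma) - P(x,\sigma') = x_i f(A \cup i) + x_j f(A \cup i \cup j) - x_j f(A \cup j) - x_i f(A \cup i \cup j),
\]
which rearranges to $-x_i f_{A\cup i}(j) + x_j f_{A\cup j}(i)$. Since $f$ is strictly decreasing, both $f_{A\cup i}(j)$ and $f_{A\cup j}(i)$ are negative, so the indexability hypothesis may be applied to the ratio $f_{A\cup j}(i)/f_{A\cup i}(j) = z_i/z_j$. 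Substituting yields that $P(x,\sigma) \ge P(x,\sigma')$ if and only if $x_i/z_i \ge x_j/z_j$.

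Finally I would invoke a bubble-sort argument: any search that is not sorted in non-increasing order of $x_i/z_i$ contains an adjacent pair violating this inequality, and swapping it weakly increases the payoff; iterating reaches the sorted search without decreasing the payoff, so the sorted search is optimal. I do not anticipate any real obstacle; the only thing to verify carefully is that when we swap positions $k$ and $k+1$ both indices $i,j$ lie outside $A$, so that the indexability identity \eqref{eq:index} legitimately applies, and that the left-hand side of \eqref{eq:index} is truly nonzero (given by the strict monotonicity of $f$).
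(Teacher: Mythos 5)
Your proposal is correct and follows essentially the same route as the paper: both write the expected payoff as $\sum_k x_{\sigma(k)} f(S^\sigma_k)$, perform an adjacent interchange, reduce the difference to $x_j f_{A\cup j}(i) - x_i f_{A\cup i}(j)$, and apply the indexability identity to compare with $x_i/z_i$ versus $x_j/z_j$ (the paper phrases the final step as a contradiction for an optimal non-sorted search rather than an explicit bubble-sort, but this is the same argument). Your sign bookkeeping and the care about $i,j\notin A$ are both sound.
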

\begin{proof}
The proof is a standard interchange argument. For a fixed search $\sigma$, the expected payoff of the game is
\[
P(x, \sigma) = \sum_{i=1}^n x_{\sigma(i)}f(S^{\sigma}_i).
\]
Suppose the elements of $S$ are searched according to some search $\sigma$ which is not in non-increasing order of the index $x_i/z_i$. Let $j$ be some element of $S$ such that the index of the element $i$ that is searched immediately after $j$ is larger than that of $j$. That is, $x_i/z_i > x_j/z_j$. Let $\sigma'$ be the same as $\sigma$, with the order of elements $i$ and $j$ transposed. 

Let $A = S^{\sigma}_{\sigma^{-1}(j)}-\{j\}$ be the subset of locations searched up but not including location $j$. Then
\begin{align*}
P(x, \sigma) - P(x, \sigma') &= (x_j f(A \cup j) + x_i f(A \cup \{i,j\})) - (x_i f(A \cup i) + x_j f(A \cup \{i,j\}))\\
& = x_i f_{A \cup i}(j) - x_j f_{A \cup j}(i) \\
& = z_j f_{A \cup j}(i) \left( \frac{x_i}{z_i} - \frac{x_j}{z_j} \right) \text{ (by~(\ref{eq:index}))} \\
& >0.
\end{align*}
Hence, $\sigma$ is not optimal, a contradiction. The theorem follows.
\end{proof}

\subsection{Examples of $\mathbf z$-Indexable Games} \label{sec:ex}

Here we describe some examples of games $\Gamma_f$ that are $\mathbf z$-indexable.

\subsubsection{The search and rescue game.} \label{sec:s&r}
It is easy to verify that for the search and rescue game, the function $f$ is $\mathbf z$-indexable. Indeed, for $i,j \notin A$, we have
\[
\frac{f_{A \cup j}(i)}{f_{A \cup i}(j)} = \frac{\left(p_i p_j\prod_{s \in A}p_s \right)-\left(p_j\prod_{s \in A}p_s \right)} {\left(p_i p_j\prod_{s \in A}p_s \right)-\left( p_i\prod_{s \in A}p_s \right)} = \frac{p_j(1-p_i)}{p_i(1-p_j)}.
\]
Thus, we can take $z_i = (1-p_i)/p_i$, and Theorem~\ref{thm:BR} implies that the solution to the best response problem for $k=1$ is to search the locations in non-increasing order of the index $x_i/z_i = x_ip_i/(1-p_i)$. This is consistent with the result of \cite{Agnetis}.

We may also extend the game, as in \cite{Agnetis}, by incorporating a discount factor on the probabilities $x_i$. More precisely, suppose that if there is a target at the location that appears in position $t$ of a search, there is a probability $\gamma^t$ that the target will be there when that location is searched, where $0 \le \gamma^t \le 1$. This could be the result of an increased likelihood that the target will not survive as time goes on, due to dangerous environmental factors or a malicious adversary. 

This is equivalent to the original game with new probabilities $p_i'=\gamma p_i$, and we obtain 
\[
\frac{f_{A \cup j}(i)}{f_{A \cup i}(j)} = \frac{p_j'(1-p_i')}{p_i'(1-p_j')} = \frac{p_j(1-\gamma p_i)}{p_i(1- \gamma p_j)}.
\]
Thus, we can take $z_i = (1-\gamma p_i)/p_i$, and it follows from Theorem~\ref{thm:BR} that the solution to the best response problem for $k=1$ is to search the locations in non-increasing order of $x_i/z_i = x_ip_i/(1- \gamma p_i)$, as shown directly in \cite{Agnetis}.

\subsubsection{An additive search game} \label{sec:additive}
This example is taken from \cite{Lidbetter13}. In this game, each location $i$ in $S$ has a {\em search cost} $c_i >0$ which the Searcher must pay to search it. The objective is to order the locations so as to minimize the sum of the search costs of all the locations searched until all $k$ targets have been found. In order to fit our framework of a decreasing function $f$, we take $f(A) =  \sum_{i \notin A} c_i$ to be the cost of locations {\em not} searched. In this case, for $i,j \notin A$, we have
\[
\frac{f_{A \cup j}(i)}{f_{A \cup i}(j)} = \frac{ ( - c_i -c_j + \sum_{s \notin A} c_s) - (-c_i - \sum_{s \notin A} c_s)}{(-c_i - c_j - \sum_{s \notin A} c_s) - (-c_j - \sum_{s \notin A} c_s)} = \frac{c_i}{c_j},
\]
and we can take $z_i = c_i$. Hence, by Theorem~\ref{thm:BR}, the solution to the best response problem for $k=1$ is to search the locations in non-increasing order of $x_i/z_i = x_i/c_i$. 

The best response problem to this game is a well-known search problem, first considered by~\cite{Bellman} (Chapter III, Exercise 3, p.90), and the solution is nothing new. It can equivalently be framed as the single machine scheduling problem posed by \cite{Smith}, now notated in the scheduling literature by $1||\sum w_j C_j$. The rule that the locations should be searched in non-increasing order of $x_i/z_i$ is known as {\em Smith's Rule}.

\subsubsection{A search game on a graph with traveling and search costs} \label{sec:network}
We describe here an application to a game introduced by~\cite{BK13}. The game is played on a graph with vertices $S$, and the Hider hides at one of the vertices. The Searcher starts at any vertex of her choosing and follows a walk in the network (that is, a sequence of vertices, each one of which is adjacent to the previous vertex). Each time the Searcher traverses an edge $e$, she pays a cost $d(e)$, and when visiting a vertex $i$, she can choose to search it for a search cost of $c_i$. If the Hider is located at a vertex, the Searcher finds him if and only if she pays the search cost. The payoff is the total cost to find the Hider.

\cite{BK13} solve the game for graphs that have a Hamiltonian, in the case that the traveling costs $d(e)$ are all equal to 1.  Here, we generalize their game to allow multiple targets to be hidden at vertices of the graph, so that the Searcher wants to minimize the cost of finding all the targets. In the case that the graph is complete (that is, there is an edge between every pair of vertices), the total cost of searching a subset $A$ of vertices is $ |A|-1 + \sum_{i \in A} c_i $. In this case, the game can be modeled by $\Gamma_f$, where we let $f(A) = |S-A| + \sum_{i \notin A} c_i $, to ensure the Searcher is the maximizer. Then for $i,j \notin A$, we have
\[
\frac{f_{A \cup j}(i)}{f_{A \cup i}(j)} = \frac{ ( |S-A|  - c_i -c_j + \sum_{s \notin A} c_s) - (|S-A| -c_j + \sum_{s \notin A} c_s)}{(|S-A| -c_i - c_j + \sum_{s \notin A} c_s) - (|S-A| -c_i + \sum_{s \notin A} c_s)} = \frac{1+c_i}{1+ c_j},
\]
so we can put $z_i = 1+c_i$.

Note that this game is actually equivalent to the game described in the previous subsection, if we take the cost of a location $i$ to be equal to $c_i + 1$.

\subsection{Solution to the Game $\Gamma_f$}

We now present a solution to the game $\Gamma_f$ when $f$ is $\mathbf z$-indexable. The solution mirrors the solution of the additive search game considered in \cite{Lidbetter13}. We will see that in the solution, the Searcher will randomize between mixed strategies which search some subset $A \in S^{(k)}$ first, in an arbitrary order, then choose the other elements of $S$ in a uniformly random order. We denote such a strategy by $s_A$. Note that the order of the first $k$ elements of the search do not matter, since the Searcher must search at least $k$ locations before finding all $k$ targets.
  
\begin{lemma} \label{lem:symm1}
Consider the game $\Gamma_f$, for an arbitrary function $f:2^S \rightarrow \mathbb R$. For any $A,B \in S^{(k)}$, we have $P_f(A, s_B) = P_f(B, s_A)$.
\end{lemma}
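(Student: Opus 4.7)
The plan is to rewrite both sides of the claimed equality as expectations of $f$ applied to a random subset of $S$, and then show that these two random subsets have the same distribution.

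First I would introduce notation. Set $C = A \cap B$, $A' = A \setminus B$, $B' = B \setminus A$, and $R = S \setminus (A \cup B)$, and let $m = |A'| = |B'| = k - |C|$ and $r = |R|$. Under $s_B$, the first $k$ positions of the search constitute some arrangement of $B$, and positions $k+1, \ldots, n$ are a uniformly random ordering $\pi$ of $S \setminus B = A' \cup R$. When the Hider plays $A$, the elements of $C$ are recovered in the first $k$ steps, so the search ends precisely when the last element of $A'$ appears in $\pi$. Consequently, the set of locations searched equals $B \cup A' \cup X = A \cup B \cup X$, where $X \subseteq R$ is the random set of $R$-elements preceding the final $A'$-element in $\pi$. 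Hence
\[
P_f(A, s_B) = \mathbb{E}[f(A \cup B \cup X)],
\]
and by the same argument $P_f(B, s_A) = \mathbb{E}[f(A \cup B \cup Y)]$, where $Y \subseteq R$ is defined analogously from a uniformly random ordering of $B' \cup R$.

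It then suffices to verify that $X$ and $Y$ are equidistributed as random subsets of $R$. The complement $R \setminus X$ is exactly the maximal suffix of $\pi$ consisting entirely of $R$-elements. A direct count of permutations of $A' \cup R$ shows that, for any $X_0 \subseteq R$ with $|X_0| = \ell$, the probability $P(X = X_0)$ depends only on $\ell$, $m$, and $r$ (and not on the particular identity of the elements of $A'$ or $R$). The identical formula describes the distribution of $Y$ because $|B'| = m$ and the complementary pool in that permutation is also $R$ of size $r$. Hence $X \stackrel{d}{=} Y$, and so $\mathbb{E}[f(A \cup B \cup X)] = \mathbb{E}[f(A \cup B \cup Y)]$, yielding the lemma.

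No serious obstacle is expected. The only point that requires a small amount of care is checking that the "suffix" distribution really is symmetric in $A$ and $B$; this boils down to the identity $|A \setminus B| = |B \setminus A|$, which is forced by $|A| = |B| = k$. The remainder is elementary counting, and the argument works for arbitrary $f$ — no indexability or monotonicity hypothesis is invoked.
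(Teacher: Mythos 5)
Your proof is correct and follows essentially the same route as the paper: both sides are rewritten as $\mathbb{E}[f(A\cup B\cup X)]$ for a random subset $X$ of $S\setminus(A\cup B)$ whose distribution is symmetric in $A$ and $B$, and the lemma follows. If anything, your version is the more careful one --- the paper asserts that the extra elements are included \emph{independently} with a common probability $q$, which is not literally true of the ``elements preceding the last $A'$-element'' set, whereas your observation that the law of $X$ depends only on $\ell$, $m$ and $r$ is exactly the exchangeability fact the symmetry argument needs.
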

\begin{proof}
First suppose the Hider uses strategy $A$ and the Searcher uses strategy $s_B$, and we will calculate the expected payoff $P(A,s_B)$. Observe that the strategy $s_B$ must search all the elements of $B$ before finding the $k$ targets (since these are the first $k$ locations searched) and it must search all the elements of $A$ before finding the targets, because this is where they are located. Each other element of $S$, contained in the complement $(A \cup B)^c$ is searched with some probability $q$ before the $k$ targets are found. This probability $q$ must be the same for every element of $(A \cup B)^c$, since the elements of $B^c$ are searched in a uniformly random order.

Let $X$ be a random variable equal to $f(A\cup B \cup C)$, where the elements of $C$ are chosen independently with probability $q$ from the set $(A \cup B)^c$. Then, the expected payoff $P(A, s_B)$ is equal to the expectation $\mathbb E(X)$.

Clearly, by symmetry, the expected payoff $P(B, s_A)$ is also equal to $\mathbb E(X)$.
\end{proof}

We will use a simple but useful lemma, which we state without proof (see Lemma~2.6 of \cite{Lidbetter13} for the proof).
\begin{lemma} \label{lem:symm}
Consider an arbitrary zero-sum game with a symmetric payoff matrix in which Player I has a mixed strategy $x$ that makes Player II indifferent between all her pure strategies. Then $x$ is an optimal strategy for both Player I and Player II.
\end{lemma}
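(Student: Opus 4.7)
The plan is to exploit the symmetry $M_{ij}=M_{ji}$ of the payoff matrix $M$ directly, and show that both players can guarantee the same expected payoff $v$ by playing $x$. First I would let $v$ denote the common expected payoff that Player~I's strategy $x$ yields against each of Player~II's pure strategies, so that $\sum_i x_i M_{ij} = v$ for every pure strategy $j$ of Player~II. For any mixed strategy $y$ of Player~II, the expected payoff $\sum_{i,j} x_i M_{ij} y_j$ is then a convex combination of column sums, each equal to $v$, and therefore equals $v$. Hence by playing $x$, Player~I pins the expected payoff at exactly $v$ against every mixed response.

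The second step is to use symmetry to swap the roles. When Player~II uses $x$, the expected payoff against a pure strategy $i$ of Player~I is $\sum_j M_{ij} x_j$, and by $M_{ij}=M_{ji}$ this equals $\sum_j M_{ji} x_j = \sum_j x_j M_{ji}$, which is $v$ by the indifference condition applied to column~$i$. So Player~II, by also playing $x$, forces the expected payoff to be exactly $v$ against any pure (and hence any mixed) strategy of Player~I.

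Combining these two observations, $v$ is simultaneously a lower bound for the maxmin value and an upper bound for the minmax value, so the value of the game equals $v$ and $x$ is optimal for each player regardless of which one is the maximizer. The argument is essentially bookkeeping; the only subtlety is keeping the row/column conventions straight when invoking symmetry, which is presumably why the author elected to quote the result rather than reproduce its short proof.
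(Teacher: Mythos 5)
Your proof is correct and complete: showing that $x$ pins the payoff at the common value $v$ for whichever player uses it, via the indifference condition and the transpose symmetry $M_{ij}=M_{ji}$, immediately gives that $v$ is the value and $x$ is optimal for both players. The paper itself states this lemma without proof (deferring to Lemma~2.6 of the cited 2013 reference), and your argument is exactly the short standard one that the author chose not to reproduce.
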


We can now solve the search and rescue game, but to state the solution we will use the following definition. 
\begin{definition} For a subset $A \subseteq S$, let 
\[
T_k(A) \equiv \sum_{B \in A^{(k)}} \prod_{i \in B} z_i.
\]
\end{definition}
Note that 
\begin{align}
T_{k}(A \cup i)z_j - T_{k}(A \cup j)z_i = z_j T_{k}(A) - z_i T_{k}(A). \label{eq:T}
\end{align}

\begin{theorem} \label{thm:main}
Consider the search game $\Gamma_f$, and suppose $f$ is $\mathbf z$-indexable. Then it is optimal for the Hider to use the mixed strategy $q$, whereby a set $A \in S^{(k)}$ is chosen with probability $q_A$ given by  
\[
q_A \equiv \frac{\prod_{i \in A} z_i}{T_k(S)},
\]
It is optimal for the Searcher to use the strategy $s$ that chooses $s_A$ with probability proportional to $q_A$. 
\end{theorem}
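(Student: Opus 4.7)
The plan is to establish that against the Hider's strategy $q$, \emph{every} pure Searcher strategy $\sigma$ yields the same expected payoff $V$; the Searcher's optimality then follows quickly from Lemma~\ref{lem:symm1}, bypassing the need to restrict attention to the reduced game of strategies $s_A$.

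First I would derive a convenient formula for $P(q, \sigma)$ by conditioning on the position $t$ at which the last of the $k$ targets lies in the order $\sigma$. Since $q$ weights each $k$-subset $A$ proportionally to $\prod_{i \in A} z_i$, the probability that all $k$ targets lie in a subset $U \subseteq S$ is exactly $T_k(U)/T_k(S)$, so the probability the last target appears at position $t$ is $[T_k(S^\sigma_t) - T_k(S^\sigma_{t-1})]/T_k(S)$. This yields
$$P(q, \sigma) = \frac{1}{T_k(S)} \sum_{t=k}^{n} f(S^\sigma_t)\bigl[T_k(S^\sigma_t) - T_k(S^\sigma_{t-1})\bigr],$$
with the convention $T_k(S^\sigma_{k-1}) = 0$.

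The key step is to show this quantity is invariant under swapping two adjacent positions $t-1$ and $t$ of $\sigma$. Writing $A = S^\sigma_{t-2}$, $i = \sigma(t-1)$, $j = \sigma(t)$, only the two summands at $r = t-1$ and $r = t$ are affected, and after a brief rearrangement the difference collapses to
$$T_{k-1}(A)\bigl[z_j\, f_{A \cup j}(i) - z_i\, f_{A \cup i}(j)\bigr],$$
which vanishes by the $\mathbf z$-indexability condition~(\ref{eq:index}). Identity~(\ref{eq:T}) (equivalently, $T_k(A\cup i) - T_k(A) = z_i T_{k-1}(A)$) is what extracts the common factor $T_{k-1}(A)$. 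The boundary cases in which the swap lies within the first $k$ positions are immediate: either the sum is entirely unaffected, or the only change involves $T_k$ of a set of size less than $k$, which is $0$. Since any two orderings are connected by adjacent swaps, $P(q, \sigma)$ equals a common value $V$, so $q$ guarantees the Hider a payoff of at most $V$.

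For the Searcher's side, let $s$ denote the mixture that plays $s_A$ with probability $q_A$. For any pure Hider strategy $B$, Lemma~\ref{lem:symm1} gives
$$P(B, s) = \sum_{A \in S^{(k)}} q_A P(B, s_A) = \sum_{A \in S^{(k)}} q_A P(A, s_B) = P(q, s_B) = V,$$
the last equality coming from the invariance applied by linearity to the mixed strategy $s_B$. Hence $s$ guarantees at least $V$, the two bounds match, and both $q$ and $s$ are optimal. The main obstacle is the swap calculation: showing that, after cancellations, the two affected terms collapse into $T_{k-1}(A)[z_j f_{A \cup j}(i) - z_i f_{A \cup i}(j)]$, and recognizing identity~(\ref{eq:T}) as precisely the ingredient that permits this factorization before $\mathbf z$-indexability finishes the job.
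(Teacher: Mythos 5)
Your proof is correct. The core of the argument---showing that the Hider strategy $q$ equalizes the expected payoff over all pure searches via an adjacent-transposition computation that collapses to $T_{k-1}(A)\left[z_j f_{A\cup j}(i) - z_i f_{A\cup i}(j)\right]$ and then vanishes by (\ref{eq:index})---is exactly the paper's argument; your explicit formula for $P(q,\sigma)$ via the position of the last target, with increments $T_k(S^\sigma_t)-T_k(S^\sigma_{t-1}) = z_{\sigma(t)}T_{k-1}(S^\sigma_{t-1})$, is just a cleaner bookkeeping of the same coefficients the paper writes down directly, and your treatment of the boundary case (positions at or before $k$, where $T_{k-1}(A)=0$) matches the paper's. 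Where you genuinely diverge is the Searcher's side: the paper restricts to the subgame with Searcher strategies $\{s_A\}$, invokes the symmetric-payoff-matrix lemma (Lemma~\ref{lem:symm}) to conclude optimality in the restricted game, and then argues that the restricted value lower-bounds the true value. You instead use Lemma~\ref{lem:symm1} directly in the chain $P(B,s)=\sum_A q_A P(B,s_A)=\sum_A q_A P(A,s_B)=P(q,s_B)=V$, which bypasses Lemma~\ref{lem:symm} and the restricted-game detour entirely. Your route is slightly shorter and makes the matching upper and lower bounds explicit; the paper's route isolates the symmetry phenomenon as a reusable lemma (imported from the earlier additive-search paper). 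Both are valid, and the difference is one of packaging rather than substance.
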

\begin{proof}
Suppose the Hider uses the strategy $q$ described in the statement of the theorem. We will show transposing any two adjacent elements of a given search $\sigma$ leaves the expected payoff unchanged. Since any search can be obtained from $\sigma$ by a sequence of such transpositions, this is sufficient to prove that all searches have the same expected cost against this Hider strategy.

Thus, suppose in a search $\sigma$, the element $i$ comes immediately before $j$, and let $\sigma'$ be the search that is the same as $\sigma$ except that elements $i$ and $j$ are transposed. If $j$ comes in position $k$ or earlier in $\sigma$, then clearly transposing $i$ and $j$ leaves the expected payoff unchanged. Otherwise, we can compute the difference in the expected payoffs of the two searches against the Hider strategy $q$ as follows. Let $A$ denote the set of locations searched before $i$.
\begin{align}
P(q, \sigma) - P(q, \sigma') &= \left( \frac{T_{k-1}(A)z_i}{T_k(S)} f(A \cup i) + \frac{T_{k-1}(A \cup i)z_j}{T_k(S)}f(A \cup \{i,j\}) \right) \nonumber \\
& \quad -  \left( \frac{T_{k-1}(A)z_j}{T_k(S)} f(A \cup j) + \frac{T_{k-1}(A \cup j)z_i}{T_k(S)}f(A \cup \{i,j\}) \right). \label{eq:diff}
\end{align}
Considering the coefficient of $f(A \cup \{i,j\})$ in~(\ref{eq:diff}), and using~(\ref{eq:T}) with $k$ replaced by $k-1$, we get
\begin{align*}
P(q, \sigma) - P(q, \sigma') &= \frac{T_{k-1}(A)}{T_{k}(S)} \left( z_j f_{A \cup j}(i) - z_i  f_{A \cup i}(j) \right) \\
&=0,
\end{align*}
by~(\ref{eq:index}).

The argument above shows that the value of the game is at most $V$, where $V=P(q,\sigma)$ is the payoff when the Hider uses the strategy $q$ against any Searcher strategy $\sigma$. If we restrict the Searcher to strategies of the form $s_A$, then the value of the restricted game is equal to $V$ and the strategies $s$ and $q$ are optimal, by Lemmas~\ref{lem:symm1} and~\ref{lem:symm}. Since the value, $V$ of the restricted game is a lower bound for the value of the original (unrestricted) game, the strategies $q$ and $s$ must also be optimal in the original game.
\end{proof}

Theorem~2.1 of \cite{Lidbetter13}, which gives the solution of the additive search game, follows as a corollary of Theorem~\ref{thm:main} of this paper. Our theorem also gives a solution to the search and rescue game. If $k=1$, there is a particularly simple expression for the value of the game.

\begin{theorem}
In the search and rescue game it is optimal for the Hider to choose a subset $A \in S^{(k)}$ with probability 
\[
q_A \equiv \lambda_k \prod_{i \in A}\frac{1-p_i}{p_i}, \text{ where } \lambda_k \equiv \left( \sum_{B \in S^{(k)}} \prod_{i \in B} \frac{1-p_i}{p_i} \right)^{-1}.
\]
It is optimal for the Searcher to choose a subset $A\in S^{(k)}$ of locations to search first with probability $q_A$, then search the remaining locations in a uniformly random order. For $k=1$, the value $V$ of the game is 
\begin{align}
V \equiv \lambda_1(1 - \prod_{i \in S} p_i) \label{eq:value}
\end{align}
\end{theorem}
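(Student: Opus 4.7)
The plan is to derive the theorem as a direct corollary of Theorem~\ref{thm:main}, once the correct indexing vector $\mathbf z$ is plugged in, and then to compute the value for $k=1$ by a short telescoping calculation.

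First, I would appeal to the computation already carried out in Section~\ref{sec:s&r}, which establishes that the reward function $f(A) = \prod_{i \in A} p_i$ for the search and rescue game is $\mathbf z$-indexable with $z_i = (1-p_i)/p_i$. Substituting this $\mathbf z$ into the general formula of Theorem~\ref{thm:main} gives $\prod_{i \in A} z_i = \prod_{i \in A}(1-p_i)/p_i$ and $T_k(S) = \sum_{B \in S^{(k)}} \prod_{i \in B}(1-p_i)/p_i = \lambda_k^{-1}$, so the Hider's prescribed mixed strategy is exactly $q_A = \lambda_k \prod_{i \in A}(1-p_i)/p_i$, and the Searcher's optimal randomization among the $s_A$'s with probabilities proportional to $q_A$ is precisely what the statement asserts.

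For the value when $k=1$, I would use the fact established inside the proof of Theorem~\ref{thm:main} that against the Hider's optimal strategy $q$, every search yields the same payoff $V$. Picking any convenient search $\sigma$ (say the identity ordering for notational simplicity) and writing out
\[
V = P(q,\sigma) = \sum_{h=1}^{n} q_{\{h\}} f(S^{\sigma}_{h}) = \lambda_1 \sum_{h=1}^n \frac{1-p_h}{p_h} \prod_{j=1}^h p_j = \lambda_1 \sum_{h=1}^n (1-p_h) \prod_{j=1}^{h-1} p_j,
\]
the inner sum telescopes as $\sum_{h=1}^n \bigl(\prod_{j=1}^{h-1} p_j - \prod_{j=1}^h p_j\bigr) = 1 - \prod_{i \in S} p_i$, yielding formula~(\ref{eq:value}).

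There is essentially no hard step: the indexability of the search and rescue game has already been recorded, Theorem~\ref{thm:main} delivers the strategy pair for free, and the value computation is a one-line telescoping identity. The only thing worth being careful about is ensuring that the payoff $f(S^{\sigma}_h)$ in the $k=1$ case is indeed $\prod_{j=1}^{h} p_j$ (the probability of surviving the first $h$ searches, including the one that finds the target), which follows directly from the definition of $P(H,\sigma)$ in Section~\ref{sec:prelim}.
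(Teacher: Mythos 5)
Your proposal is correct and follows essentially the same route as the paper: the strategies are obtained by substituting $z_i = (1-p_i)/p_i$ (already verified in Section~\ref{sec:s&r}) into Theorem~\ref{thm:main}, and the value for $k=1$ is computed by evaluating the identity ordering against the Hider's optimal strategy and telescoping. No gaps.
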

\begin{proof}
The optimality of the Hider's and Searcher's strategies follows immediately from Theorem~\ref{thm:main}. To prove the correctness of~(\ref{eq:value}), it is sufficient to show that against the given Hider strategy, some (and therefore, every) search has expected payoff $V$. Let $\sigma$ be the search that chooses the locations in increasing order from $1$ to $n$. Then
\begin{align*}
P(q, \sigma) &= \sum_{i=1}^n q_{i} \prod_{j \le i} p_j  \\
& = \lambda_1 \sum_{i=1}^n \frac{1-p_i}{p_i} \prod_{j \le i} p_j  \\
& = \lambda_1 \sum_{i=1}^n (1-p_i) \prod_{j <i} p_j.
\end{align*}
The sum is telescopic, and reduces to the expression on the right-hand side of~(\ref{eq:value}).
\end{proof}

\section{The Search and Rescue Game on a Graph} \label{sec:graph}

In this section we consider the search and rescue game played on a graph. That is, we assume that the set of hiding locations $S$ are the vertices $V(G)$ of a graph $G$ with edge set $E(G)$, which is a collection of unordered pairs of vertices $(i,j), i \neq j$. The Searcher begins at a specified vertex $O$ of the graph, called the {\em root}, and can search the graph using an {\em expanding search}, which is a search paradigm introduced by the author in~\cite{AL13}. An expanding search of a graph $G$ with root $O$ is an ordering of the vertices $S$, starting with $O$, such that each vertex in the ordering is adjacent to some previous vertex. 

Formally, a Searcher strategy is a ordering $\sigma$ of $S$ such that $\sigma(1)=O$ and if $i>1$, then $(\sigma(i) , \sigma(i-1)) \in E(G)$. In this section we will refer to such an ordering simply as a {\em search}. If $\alpha$ is the restriction of some search $\sigma$ to some set $ \{i,i+1,\ldots,j\}$, then we call $\alpha$ a {\em subsearch} of the vertices $A = \sigma(\{i,i+1,\ldots,j\})$. In other words, $\alpha$ describes the sequence of vertices in positions $i$ through $j$ in the search.

The only difference between the game played on a graph and the original version of the game described in Section~\ref{sec:prelim} is that the Searcher's strategy set is restricted. In particular, a Hider strategy is still a probability distribution on $S$, given by a vector $\mathbf x \in \mathbb R^n$; the probability that the Searcher is not captured when she searches a vertex $v$ is denoted by $p_v$; the payoff of the game is calculated in the same way. We relax the restriction $p_v \in (0,1)$ slightly, by allowing $p_v$ to take the value $1$ as long as $v \neq O$ is not a leaf (a vertex of degree 1), otherwise it could be removed from the graph without changing the value of the game. The reason for this is that it will be convenient later to make the assumption that all vertices have degree at most $3$, and we will justify this assumption adding vertices $v$ with $p_v=1$ to the tree to obtain an equivalent game on a tree with the desired property.

Playing the game on a network complicates things, and we therefore restrict ourselves to the case $k=1$, leaving larger values of $k$ for future work.

We illustrate expanding search with an example. Consider the tree in depicted in Figure~\ref{fig:tree}. The vertices are labeled with letters, and the probabilities $p_i$ are shown next to the vertices. One possible expanding search on this network visits the vertices in the order $O, D, A, B, C$. If the target is located at $B$, say, then the payoff is $(1/2) \cdot (3/5)\cdot (2/3) \cdot (1/3) = 1/15$.

\begin{figure}[h]
\begin{center}
\includegraphics[scale=0.5]{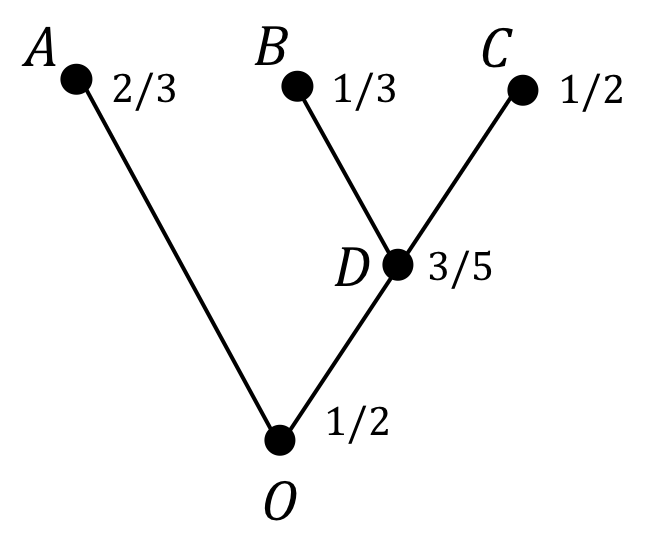}
\caption{A tree with probabilities $p_v$ indicated.}
\label{fig:tree}
\end{center}
\end{figure}

Note that if $p_O=1$ and the graph is a star (that is, a tree where $O$ is the only vertex of degree greater than 1), then an expanding search of the graph corresponds simply to an (unrestricted) ordering of all the leaf vertices. The search and rescue game played on such a graph is therefore equivalent to the game without any network structure. Equivalently, the game could be played on a complete graph.



In order to solve the search and rescue game played on a tree, we define something similar to the index $z_i$ in Section~\ref{sec:gen}. First we introduced some more notation. For a subset $A\subseteq S$, let $\pi(A)$ denote $\prod_{i\in A} p_i$. If $G$ is a graph, we may write $\pi(G)$ to express $\pi(V(G))$. We also expand the definition of the payoff function $P$. For a subsearch $\alpha$ of vertices $\sigma(\{i, i+1,\ldots, j\})$, and a fixed Hider strategy $\mathbf x$, let 
\[
P(\mathbf x, \alpha) = x_{\alpha(i)} p_{\alpha(i)} + x_{\alpha(i+1)}p_{\alpha(i)} p_{\alpha(i+1)}+ \cdots + x_{\alpha(j)}p_{\alpha(i)}\cdots p_{\alpha(j)}.
\]
In particular, if $\sigma$ is a search of the whole of $S$, then $P(\mathbf x, \sigma)$ is the payoff of the game when the Hider uses $\mathbf x$. We will usually drop the $\mathbf x$ in $P(\mathbf x, \alpha)$, and simply write $P(\alpha)$ when there is no ambiguity.

We remark that the solution of the best response problem for the game played on a tree follows from the work of~\cite{Monma}, since the function $P$ can be easily shown to satisfy what the authors call the {\em series-network decomposition property}. We give the solution here to the {\em game} played on a tree.

For a fixed Hider strategy $\mathbf x$ and subsearch $\alpha$ of vertices $A$ with $\pi(A) \neq 1$, we define an index 
\[
  I(\alpha) \equiv I_{\mathbf x}(\alpha) \equiv P(\alpha)/(1-\pi(A)).
\]
The restriction $\pi(A) \neq 1$ insures that $I(\alpha)$ is well defined. Note that if $\alpha$ consists of a single element $i$, then the index of $\alpha$ is equal to $x_i p_i/(1-p_i)$, which is the same as the index determining the optimal search in the best response problem for the game played with no network structure. We now prove a more general lemma, which says that if two subsearches are disjoint and can be executed consecutively in some order, then the subsearch with the highest index should come first.

\begin{lemma} \label{lem:density}
Let $\mathbf x$ be some fixed Hider strategy and let $\sigma$ be a search. Suppose some subsearch $\alpha$ of $\sigma$ searches a subset $A \subseteq S$ of vertices immediately before some other subsearch $\beta$ searches a subset $B \subseteq S$ disjoint from $A$, with $\pi(A), \pi(B) \neq 1$. Let $\sigma'$ be the same as $\sigma$ except that the order of $\alpha$ and $\beta$ are transposed. Then $P(\sigma) \le P(\sigma')$ if and only if
\begin{align}
I(\alpha) \ge I(\beta), \label{eq:density}
\end{align}
with $P(\sigma) = P(\sigma')$ if and only if (\ref{eq:density}) holds with equality. 
\end{lemma}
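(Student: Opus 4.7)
The plan is a direct interchange argument. I would decompose $\sigma$ as a concatenation $\gamma\alpha\beta\delta$, where $\gamma$ is the prefix of $\sigma$ searching a set $C$ (the vertices visited before $\alpha$) and $\delta$ is the suffix searching a set $D$ (the vertices visited after $\beta$). Then $\sigma' = \gamma\beta\alpha\delta$, with the prefix $\gamma$ and the suffix $\delta$ unchanged.

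The essential observation is that $P$ factors cleanly through cumulative survival probabilities: if a subsearch of a set $E$ appears in $\sigma$ after vertices whose cumulative survival probability is $q$, then its contribution to $P(\sigma)$ is exactly $q\,P(\text{subsearch})$, and the cumulative survival probability afterwards becomes $q\pi(E)$. Applying this segment by segment to both searches gives
\begin{align*}
P(\sigma)  &= P(\gamma) + \pi(C)P(\alpha) + \pi(C)\pi(A)P(\beta) + \pi(C)\pi(A)\pi(B)P(\delta),\\
P(\sigma') &= P(\gamma) + \pi(C)P(\beta)  + \pi(C)\pi(B)P(\alpha) + \pi(C)\pi(A)\pi(B)P(\delta).
\end{align*}
The $P(\gamma)$ and $P(\delta)$ terms cancel on subtraction, leaving
\begin{align*}
P(\sigma) - P(\sigma') = \pi(C)\bigl[P(\alpha)(1-\pi(B)) - P(\beta)(1-\pi(A))\bigr].
\end{align*}

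Next I use the index definition. The hypothesis $\pi(A), \pi(B)\ne 1$ makes $I(\alpha)$ and $I(\beta)$ well defined, and lets me substitute $P(\alpha) = I(\alpha)(1-\pi(A))$ and $P(\beta) = I(\beta)(1-\pi(B))$. A common factor of $(1-\pi(A))(1-\pi(B))$ emerges, yielding the clean identity
\begin{align*}
P(\sigma) - P(\sigma') = \pi(C)\,(1-\pi(A))\,(1-\pi(B))\,\bigl[I(\alpha) - I(\beta)\bigr].
\end{align*}

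The final step is a positivity check. Since every $p_v>0$ we have $\pi(C)>0$ (interpreting the empty product as $1$ if $\gamma$ is empty), while $p_v\le 1$ combined with $\pi(A),\pi(B)\ne 1$ forces $1-\pi(A)$ and $1-\pi(B)$ to be strictly positive. Thus the prefactor on the right-hand side is strictly positive, and the lemma---both the equivalence and its equality case---follows directly from the identity by comparing signs on each side. The only real obstacle is bookkeeping in the decomposition, particularly when $\gamma$ is empty or when the root sits at the start of $\alpha$; once the multiplicative factorization of $P$ is carefully verified, the algebra is mechanical.
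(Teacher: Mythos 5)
Your argument is correct and is essentially the paper's own proof: the same decomposition into a prefix searching $C$, the two blocks, and a common tail, followed by the same factorization into $\pi(C)(1-\pi(A))(1-\pi(B))\bigl(I(\alpha)-I(\beta)\bigr)$ and the same positivity check on the prefactor. One point deserves care, though: the identity you derive (correctly) is
\[
P(\sigma)-P(\sigma') = \pi(C)(1-\pi(A))(1-\pi(B))\bigl(I(\alpha)-I(\beta)\bigr),
\]
which yields $P(\sigma)\ge P(\sigma')$ if and only if $I(\alpha)\ge I(\beta)$ --- the \emph{reverse} of the inequality as literally written in the lemma statement. This is consistent with the surrounding discussion (the Searcher is the maximizer, so the higher-index subsearch should come first) and with how the lemma is invoked later, and the paper's own proof contains the matching slip: it labels the quantity $\pi(C)P(\alpha)+\pi(C\cup A)P(\beta) - \pi(C)P(\beta)-\pi(C\cup B)P(\alpha)$ as $P(\sigma')-P(\sigma)$ when it is in fact $P(\sigma)-P(\sigma')$. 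So rather than asserting that the stated form ``follows directly by comparing signs,'' you should note that your identity proves the statement with the inequality direction corrected; as literally written, the lemma contradicts your (correct) computation.
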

\begin{proof}
Let $C \subseteq S$ be the set of locations searched immediately before $\alpha$ in $\sigma$. Then 
\begin{align*}
P(\sigma') - P(\sigma) &= (\pi(C)P(\alpha) + \pi(C \cup A) P(\beta)) - (\pi(C)P(\beta) + \pi(C \cup B)P(\alpha)) \\
&= \pi(C) (1-\pi(A))(1-\pi(B)) \left(I(\alpha) - I(\beta) \right),
\end{align*}
by definition of $I(\alpha)$ and $I(\beta)$. The lemma follows immediately.
\end{proof}

Lemma~\ref{lem:density} is a variation of the {\em Search Density Lemma}, found in various guises in studies of other search games, for example in \cite{Alpern10}, \cite{AL13} and \cite{FLV19}.

We present a solution to the game on a tree. We may assume that the maximum degree of any vertex of the tree is $3$. If not, then by successively adding vertices $v$ with $p(v) = 1$, we can iteratively transform the tree into a tree with degree at most $3$ such that the value of the game is the same on both trees and there is a one-to-one correspondence between strategies on one and on the other. Thus, any solution of the game for the transformed tree can be mapped back onto the original tree. For example, suppose Figure~\ref{fig:tree2} depicts a subgraph of a particular tree: in particular, the degree $4$ vertex $A$ and its neighbors. Assume that $B$ is closer to $O$ than $A$.  In other words, the path from $A$ to $O$ contains the vertex $B$, so that in any search of the tree, $B$ must appear before $A$. Then the subgraph depicted on the left of the figure can be replaced with the subtree shown on the right, where $p_X=1$, so that there is zero probability that the Searcher will be captured after she searches the vertex $X$. Strategies on the new tree map onto strategies on the original one in a natural way, with no alteration to the payoffs.

\begin{figure}[h]
\begin{center}
\includegraphics[scale=0.5]{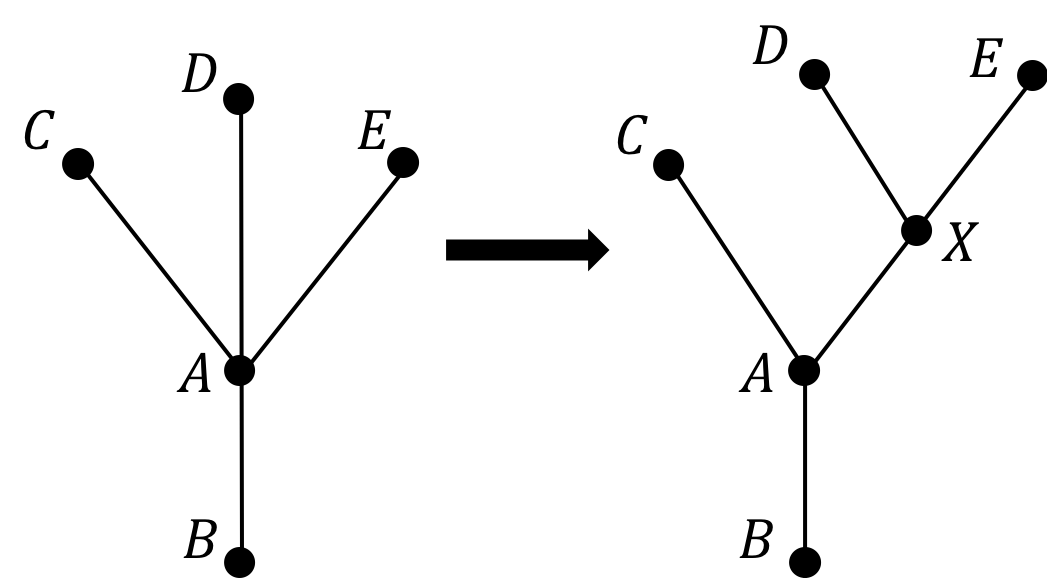}
\caption{Transformation of a degree 4 vertex.}
\label{fig:tree2}
\end{center}
\end{figure}

We recursively define a strategy $h \equiv h_G$ for the Hider on a tree $G$ which we will later prove is optimal. We also define recursively a quantity $V_G$, which we will later prove is the value of the game. We first define the {\em branches} of a tree $G$ with root $O$ as the connected components of the tree obtained when $O$ and its incident edges are removed from $G$. The roots of the branches are the neighbors of $O$. We call vertices $v$ with degree 3 {\em branch vertices}. We also refer to $O$ as a branch vertex if it has degree 2.

\begin{definition}[{\textbf Tree hiding strategy}] The Hider strategy $h_G$ is defined recursively as follows for rooted trees $G$. If $G$ has only has one vertex $v$, then let $h_G(v)=1$ and $V_G=p_v$. If $G$ has more than one vertex, there are two cases, depending on whether or not $O$ is a branch vertex.

\textbf{Case 1.} The root $O$ is not a branch vertex. In this case, let $G'$ be the unique branch of $G$ and let $O'$ be its root. The Hider strategy $h_G$ on $G$ is given by $h_G(O)=0$ and $h_G(v)=h_{G'}(v)$ for all vertices $v$ in $G'$. Let $V_G = p_O V_{G'}$.

\textbf{Case 2.} 
The root $O$ is a branch vertex. In this case, let $G_1$ and $G_2$ be the branches of $G$ and let $O_1$ and $O_2$ be their roots. Then we set
\[
h_G(G_i)= \lambda_G \left( \frac{1-\pi(G_i)}{V_{G_i}} \right), i=1,2,
\]
where
\[
\lambda_G = \left(  \frac{1-\pi(G_1)}{V_{G_1}} +  \frac{1-\pi(G_2)}{V_{G_2}} \right)^{-1}
\]
is a normalizing factor. Then for a vertex $v$ in $G_i$, we set 
\[
h_G(v) = h_G(G_i) h_{G_i}(v).
\]
We also set
\[
V_G = p_O \lambda_G (1-\pi(G_1)\pi(G_2)).
\]

\end{definition}

Note that the support of $h$ is the set of leaves of $G$ (excluding $O$, if it is a leaf). It is obvious that the support of any optimal strategy must be the set of leaves, because all other vertices are dominated by some leaf. 

We illustrate the computation of $h_G$ and $V_G$ for the tree depicted in Figure~\ref{fig:tree}. For any vertex $v$, let $G(v)$ be the subtree of $G$ containing all vertices whose path to $O$ contains $v$. Using $V_{G(B)} = 1/3$ and $V_{G(C)} = 1/2$, we compute
\begin{align*}
\lambda_{G(D)} &= \left(  \frac{1-\pi(G(B))}{V_{G(B)}} +  \frac{1-\pi(G(C))}{V_{G(C)}} \right)^{-1} \\
&= \left(  \frac{1-1/3}{1/3} +  \frac{1-1/2}{1/2} \right)^{-1}  = 1/3.
\end{align*}
Hence, we find that
\begin{align*}
h_{G(D)}(B) &= \lambda_{G(D)} \left( \frac{1-\pi(G(B))}{V_{G(B)}} \right) = (1/3) \left( \frac{1-1/3}{1/3} \right) = 2/3 \text{ and}\\
h_{G(D)}(C) &= \lambda_{G(D)} \left( \frac{1-\pi(G(C))}{V_{G(C)}} \right) = (1/3)\left( \frac{1-1/2}{1/2} \right) = 1/3.
\end{align*}
Also, 
\[
V_{G(D)} = p_D \lambda_{G(D)} (1- \pi(G(B))\pi(G(C))) = (3/5)(1/3)(1 - (1/3)(1/2)) = 1/6.
\]
Now using $V_{G(A)} = 2/3$, we compute
\begin{align*}
\lambda_{G} &= \left(  \frac{1-\pi(G(A))}{V_{G(A)}} +  \frac{1-\pi(G(D))}{V_{G(D)}} \right)^{-1} \\
&= \left(  \frac{1-2/3}{2/3} +  \frac{1-(1/3)(1/2)(3/5)}{1/6} \right)^{-1}  = 10/59.
\end{align*}
Hence,
\begin{align*}
h_{G}(A) &= \lambda_{G} \left( \frac{1-\pi(G(A))}{V_{G(A)}} \right) = (10/59) \left( \frac{1-2/3}{2/3} \right) = 5/59 \text{ and}\\
h_{G}(G(D)) &= \lambda_{G} \left( \frac{1-\pi(G(D))}{V_{G(D)}} \right) = (10/59)\left( \frac{1-(1/3)(1/2)(3/5)}{1/6} \right) = 54/59.
\end{align*}
It follows that
\begin{align*}
h_G(B) &= h_G(G(D)) h_{G(D)}(B) = (54/59)(2/3) = 36/59 \text{ and} \\
h_G(C) &= h_G(G(D)) h_{G(D)}(C) = (54/59)(1/3) = 18/59.
\end{align*}
The value of the game is
\[
V_G = p_O \lambda_{G} (1- \pi(G(A))\pi(G(D))) = (1/2) (10/59)(1- (2/3)(1/3)(1/2)(3/5)) = 14/177.
\]
This completes the calculation of the tree hiding strategy for the tree in Figure~\ref{fig:tree}.

To show that the Hider strategy $h_G$ guarantees an expected payoff of at most $V_G$, we first show that any depth-first search of $G$ has expected payoff $V_G$ against $h_G$, where the formal definition of a depth-first search is as follows.
\begin{definition}[\textbf{Depth-first search}]
Let $G$ be a tree with root $O$.  A depth-first search of $G$ is a search such that for any vertex $v$, all the other vertices of $G(v)$ appear in the search in some order immediately after $v$.
\end{definition}

\begin{lemma} \label{lem:value}
Suppose the Hider is located on a tree $G$ according to the tree hiding strategy. Then any depth-first search $\sigma$ of $G$ has expected payoff $P(\sigma) = V_G$.
\end{lemma}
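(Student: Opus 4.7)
The plan is to prove this by strong induction on the number of vertices of $G$, following the recursive structure of the definition of $h_G$ and $V_G$. The base case is a single vertex $v$, where any (trivial) depth-first search has payoff $h_G(v) p_v = p_v = V_G$.

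For the inductive step, observe that for any vertex $u \in V(G)$, the definition of $h_G$ yields $h_G(u) = h_G(G_i) \cdot h_{G_i}(u)$ whenever $u$ lies in branch $G_i$, so that the restriction of $h_G$ to $V(G_i)$ is exactly $h_G(G_i)$ times the hiding distribution $h_{G_i}$. A depth-first search $\sigma$ of $G$ begins at $O$ (contributing $h_G(O) \cdot p_O = 0$ to the payoff, since $h_G(O)=0$ in both cases) and then, by definition of depth-first search, performs a depth-first subsearch $\sigma_i$ of each branch $G_i$ consecutively. Hence the payoff decomposes as $P(\sigma) = p_O \sum_i \pi(G_{<i}) \cdot P_{G_i}(\sigma_i)$, where $\pi(G_{<i})$ is the product of $p_v$ over vertices of branches searched strictly before $G_i$, and $P_{G_i}(\sigma_i)$ is the contribution of $\sigma_i$ against $h_G$ restricted to $G_i$. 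By the factorization above and the inductive hypothesis applied to $G_i$, this contribution equals $h_G(G_i) \cdot V_{G_i}$.

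In Case 1 there is a single branch $G'$, so $P(\sigma) = p_O \cdot 1 \cdot V_{G'} = V_G$ directly. In Case 2, taking (without loss of generality) $\sigma$ to search $G_1$ before $G_2$, we get
\begin{align*}
P(\sigma) &= p_O \bigl( h_G(G_1) V_{G_1} + \pi(G_1) \, h_G(G_2) V_{G_2} \bigr) \\
&= p_O \lambda_G \bigl( (1-\pi(G_1)) + \pi(G_1)(1-\pi(G_2)) \bigr) \\
&= p_O \lambda_G (1 - \pi(G_1)\pi(G_2)) = V_G,
\end{align*}
after substituting the definition of $h_G(G_i)$ and telescoping. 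The expression is symmetric in $G_1, G_2$, so searching $G_2$ before $G_1$ yields the same payoff, confirming that every depth-first search achieves exactly $V_G$.

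The only mild subtlety, rather than an obstacle, is verifying the factorization of the payoff into branch contributions and recognizing that the defining formula for $\lambda_G$ (and thus $h_G(G_i)$) is precisely engineered so that the two cross-terms $(1-\pi(G_1)) + \pi(G_1)(1-\pi(G_2))$ telescope to $1-\pi(G_1)\pi(G_2)$, making the total independent of the order in which the branches are searched. Once that symmetry is observed, the induction goes through cleanly.
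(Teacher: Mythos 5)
Your proof is correct and follows essentially the same route as the paper's: induction on the number of vertices, splitting into the two cases of the recursive definition, decomposing the payoff of a depth-first search as $P(\sigma) = p_O(P(\sigma_1) + \pi(G_1)P(\sigma_2))$, applying the inductive hypothesis to get $P(\sigma_i) = h_G(G_i)V_{G_i}$, and substituting the definition of $h_G(G_i)$ so that the terms telescope to $p_O\lambda_G(1-\pi(G_1)\pi(G_2)) = V_G$. Your explicit remark that the resulting expression is symmetric in $G_1,G_2$ (so the branch order does not matter) is a nice touch that the paper handles with a ``without loss of generality.''
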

\begin{proof}
Let $\sigma$ be a depth-first search of $G$. The proof is by induction on the number of vertices of $G$. The lemma is trivially true when there is only one vertex, so suppose $G$ has at least~2 vertices. There are two cases, depending on whether or not $O$ is a branch vertex. First suppose $O$ is not a branch vertex, in which case let $G'$ be its one branch, with root $O'$. By the induction hypothesis, the expected payoff of any depth-first search of $G'$ is $V_G$. Then clearly the expected payoff of $\sigma$ is
\[
P(\sigma) = p(O) V_{G'} \equiv V_G,
\]
by definition of $V_G$. 

In the other case, $O$ is a branch vertex and let $G_1$ and $G_2$ be the branches of $G$, with roots $O_1$ and $O_2$, respectively. By the induction hypothesis, any depth-first search of $G_i$ has expected payoff $V_{G_i}$ for $i=1,2$. Suppose, without loss of generality, that $\sigma$ performs successive depth-first searches $\sigma_1$ and $\sigma_2$ of $G_1$ and $G_2$ in that order, after searching $O$. Then the expected payoff of $\sigma$ is
\[
P(\sigma)  = p_O(P(\sigma_1)  +  \pi(G_1) P(\sigma_2) ).
\]
By induction, $P(\sigma_i) = h_{G}(G_i) V_{G_i}$ for all $i=1,2$, and by definition of $h$, we have $h_G(G_i) = \lambda_G (1-\pi(G_i))/V_{G_i}$. Hence,
\[
P(\sigma)  = p_O \lambda_G (  (1-\pi(G_1) + \pi(G_1)(1-\pi(G_2))) = p_O \lambda_G( 1 - \pi(G)) \equiv V_G.
\]
\end{proof}

In order to show that the value of the game is bounded above by $V_G$, it is sufficient to prove that depth-first searches are best responses to the Hider strategy $h_G$. This follows from the fact that the indices of both branches of a branch vertex are equal. We prove both of these next.

\begin{lemma} \label{lem:DF}
Suppose the Hider is located on a tree $G$ according to the strategy $h_G$. 
\begin{enumerate}
\item[(i)] Suppose $v$ is a branch vertex and let $\sigma_1$ and $\sigma_2$ be depth-first searches of the branches $G_1$ and $G_2$ of $G(v)$. Then $I(\sigma_1) = I(\sigma_2)$.
\item[(ii)] Any depth-first search is a best response to the tree hiding strategy $h_G$, and $h_G$ ensures the expected payoff of the game is at most~$V_G$.
\end{enumerate}
\end{lemma}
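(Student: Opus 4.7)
For part~(i), I would compute $I(\sigma_i)$ directly, using Lemma~\ref{lem:value}. Iterating the recursive definition of $h_G$ shows that on any subtree $G(v)$, the restriction of $h_G$ is a scalar multiple of $h_{G(v)}$: explicitly, $h_G(w)=h_G(G(v))\cdot h_{G(v)}(w)$ for $w\in G(v)$. Because $\sigma_i$ is a depth-first search of the branch $G_i$ of $G(v)$, Lemma~\ref{lem:value} applied inside $G_i$ gives $P(\sigma_i)=h_G(G(v))\cdot h_{G(v)}(G_i)\cdot V_{G_i}$ under $h_G$. Substituting $h_{G(v)}(G_i)=\lambda_{G(v)}(1-\pi(G_i))/V_{G_i}$ simplifies this to $P(\sigma_i)=h_G(G(v))\,\lambda_{G(v)}\,(1-\pi(G_i))$, so $I(\sigma_i)=h_G(G(v))\,\lambda_{G(v)}$, independent of $i$.

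For part~(ii), since Lemma~\ref{lem:value} already gives $P(\sigma)=V_G$ for any depth-first $\sigma$, it suffices to show $P(\tau)\le V_G$ for every expanding search $\tau$, and I would prove this by induction on $|V(G)|$. If the root $O$ has degree~$1$, any expanding search has the form $\tau=O\sigma'$ where $\sigma'$ is an expanding search of the unique branch $G'$; using $h_G(O)=0$ and $h_G|_{G'}=h_{G'}$, the induction hypothesis yields $P(\tau)=p_O P(\sigma')\le p_O V_{G'}=V_G$. If $O$ is a branch vertex with branches $G_1,G_2$, write $\tau=O\tau'$ and decompose $\tau'$ into maximal monochromatic blocks $\gamma_1\cdots\gamma_m$. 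In the two-block case $m=2$, $\tau'=\alpha\beta$ with $\alpha,\beta$ expanding searches of all of $G_1,G_2$; the induction hypothesis for each branch (together with $h_G|_{G_i}=h_G(G_i)h_{G_i}$) gives $P(\alpha)\le h_G(G_1)V_{G_1}=\lambda_G(1-\pi(G_1))$ and similarly for $\beta$, and the identity $(1-\pi(G_1))+\pi(G_1)(1-\pi(G_2))=1-\pi(G_1)\pi(G_2)$ then yields $P(\tau)\le V_G$.

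The main obstacle is reducing the general case $m>2$ to the two-block case. My plan is to apply Lemma~\ref{lem:density} iteratively: each swap of adjacent blocks $\gamma_i\gamma_{i+1}\to\gamma_{i+1}\gamma_i$ causes the blocks on either side of the swap (which are in the same branches as $\gamma_{i+1}$ and $\gamma_i$ respectively, by maximality of the decomposition) to merge, cutting $m$ by at least one. The delicate point is guaranteeing that a reducing swap can always be executed without decreasing $P$; here part~(i) is the essential lever, because it equates the indices of the two branches' full depth-first searches at any branch vertex and so lets equal-index interchanges preserve $P$ exactly. Combining a greedy ``swap the adjacent pair whose latter block has the larger index'' rule with the equal-index fact of part~(i) should reduce any $m>2$ configuration to a two-block configuration without decreasing $P$, thereby completing the induction.
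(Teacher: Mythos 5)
Part (i) of your proposal is correct and is essentially the paper's own computation. The problem is in part (ii), precisely at the step you flag as delicate, and it is a genuine gap rather than a routine detail. Lemma~\ref{lem:density} lets you transpose two \emph{adjacent} blocks $\gamma_i\gamma_{i+1}$ without decreasing $P$ only when $I(\gamma_{i+1})\ge I(\gamma_i)$, and for an arbitrary expanding search nothing in your argument rules out that the block indices are strictly decreasing along $\gamma_1,\ldots,\gamma_m$. In that configuration no adjacent transposition preserves $P$ (and neither does moving a block past a run of blocks, since the index of a concatenation of consecutive subsearches is a convex combination of the constituent indices), so the greedy procedure stalls on a non-depth-first search with no bound on its payoff. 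Part (i) cannot serve as the ``essential lever'' here, because it equates the indices only of \emph{full depth-first searches} of the two branches at a branch vertex; the blocks $\gamma_i$ of a general expanding search are fragments of the branches, need not be depth-first within their branch, and their indices are not controlled by part (i).

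The paper sidesteps this by never attempting to bound $P(\tau)$ for arbitrary $\tau$ via swaps. It takes $\sigma$ to be a \emph{best response} to $h_G$ (which exists since the game is finite) and shows $\sigma$ can be converted into a depth-first search with the \emph{same} payoff. Optimality supplies exactly the inequalities your argument lacks: choosing a branch vertex $v$ whose two branches are not searched consecutively but are each searched depth-first, one obtains consecutive subsearches $\sigma_1,\tau,\sigma_2$ with $\sigma_1,\sigma_2$ full depth-first searches of the branches of $G(v)$, and Lemma~\ref{lem:density} applied to the best response forces $I(\sigma_1)\ge I(\tau)\ge I(\sigma_2)$; part (i) then forces equality throughout, so swapping $\sigma_1$ with $\tau$ is payoff-preserving. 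Iterating produces a depth-first best response, and Lemma~\ref{lem:value} gives its payoff as $V_G$. Your outline closes if you run it only on a payoff-maximizing search (no induction on $|V(G)|$ is then needed); as written, applied to every expanding search, it does not.
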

\begin{proof}
For part (i), by Lemma~\ref{lem:value}, we have that $P(\sigma_i) = h_G(G_i) V_{G_i} =h_G(G(v)) h_{G(v)}(G_i) V_{G_i}$ for $i=1,2$. By definition of $h_{G(v)}(G_i)$, we have
\[
P(\sigma_i) = h_{G(v)} \lambda_{G(v)} (1- \pi(G_i)).
\]
Hence,
\[
I(\sigma_i) = \frac{P(\sigma)}{1- \pi(G_i)} = h_{G(v)} \lambda_{G(v)}.
\]
This expression is independent of $i$.

For part (ii), suppose there exists a best response $\sigma$ to $h$ that is not depth-first. Then there must exist branch vertex $v$ such that the two branches $G_1$ and $G_2$ of $G(v)$ are not searched consecutively, but the subsearches of the branches are both depth-first. Without loss of generality, assume that $G_1$ is searched before $G_2$. We may also assume that $v$ appears in the search immediately before $G_1$, since if some other vertex $w$ were searched immediately before $G_1$, then the order of search of $v$ and $w$ could be swapped, and the expected payoff would not be any greater.

So there must exist consecutive subsearches $\sigma_1,\tau,\sigma_2$, where $\sigma_1$ and $\sigma_2$ are depth-first searches of $G_1$ and $G_2$ and $\tau$ is a subsearch of some other subset $A$ of vertices, disjoint from $G(v)$.

Since $\pi$ is optimal, by Lemma~\ref{lem:density}, we must have 
\begin{align}
I(\sigma_1) \ge I(\tau) \ge I(\sigma_2) \label{eq1}
\end{align}
But, by part (i), we have $I(\sigma_1)= I(\sigma_2)$, and it follows that all the inequalities in~(\ref{eq1}) hold with equality.

Therefore, by Lemma~\ref{lem:density}, the search $\sigma'$ that results from the subsearches $\sigma_1$ and $\tau$ being swapped has the same expected payoff as $\sigma$, and is therefore a best response to $h_G$. 

But $\sigma'$ searches $G(v)$ in a depth-first manner, and applying this argument repeatedly, we can transform $\sigma$ into a depth-first search which is also a best response to $h_G$. By Lemma~\ref{lem:value}, every depth-first search has the same expected payoff, and is therefore a best response to $h_G$, and the Hider can ensure that the value of the game is at most $V_G$.
\end{proof}

We now define the strategy mixed strategy $s = s_G$ that will turn out to be optimal for the Searcher. Similarly to the Hider's strategy, we define it recursively. We first need a technical lemma to ensure that the strategy $s_G$ is well defined.

\begin{lemma} \label{lem:Vlb}
For any tree $G$, we have that $1 \ge V_G \ge \pi(G)$.
\end{lemma}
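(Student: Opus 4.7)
The plan is to prove both inequalities simultaneously by induction on the number of vertices of $G$. The base case (a single-vertex tree $G$ consisting of just a vertex $v$) is immediate, since $V_G = p_v = \pi(G) \in (0,1]$.

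For the inductive step, I would split into the two cases from the definition. \textbf{Case 1} (root $O$ not a branch vertex) is easy: $V_G = p_O V_{G'}$ and $\pi(G) = p_O \pi(G')$, so the inductive bounds $1 \ge V_{G'} \ge \pi(G')$ give $p_O \ge V_G \ge p_O \pi(G') = \pi(G)$, and $p_O \le 1$ finishes the upper bound.

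\textbf{Case 2} (root $O$ is a branch vertex with branches $G_1, G_2$) is the main work. Here $V_G = p_O \lambda_G (1 - \pi(G_1)\pi(G_2))$ with $\lambda_G^{-1} = (1-\pi(G_1))/V_{G_1} + (1-\pi(G_2))/V_{G_2}$. For the \emph{upper bound} I would apply the inductive hypothesis $V_{G_i} \le 1$ to get $\lambda_G^{-1} \ge (1-\pi(G_1)) + (1-\pi(G_2)) = 2 - \pi(G_1) - \pi(G_2)$, and then observe that $2 - \pi(G_1) - \pi(G_2) \ge 1 - \pi(G_1)\pi(G_2)$ is equivalent to $(1-\pi(G_1))(1-\pi(G_2)) \ge 0$, which holds since each $\pi(G_i) \le 1$. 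Combined with $p_O \le 1$, this gives $V_G \le 1$.

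For the \emph{lower bound}, I would use the inductive hypothesis $V_{G_i} \ge \pi(G_i)$, giving $\lambda_G^{-1} \le (1-\pi(G_1))/\pi(G_1) + (1-\pi(G_2))/\pi(G_2)$. A short calculation shows this is at most $(1-\pi(G_1)\pi(G_2))/(\pi(G_1)\pi(G_2))$, once again via the inequality $(1-\pi(G_1))(1-\pi(G_2)) \ge 0$. Rearranging yields $\lambda_G(1-\pi(G_1)\pi(G_2)) \ge \pi(G_1)\pi(G_2)$, hence $V_G \ge p_O \pi(G_1) \pi(G_2) = \pi(G)$.

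The only mild subtlety is handling possible division by zero or degenerate $\lambda_G$ when some $\pi(G_i) = 1$; in that case $p_v = 1$ for every $v \in G_i$, and a direct inspection of the definitions shows the bounds still hold (one of the reciprocals in $\lambda_G^{-1}$ simply vanishes). Otherwise the argument is just two symmetric applications of $(1-\pi(G_1))(1-\pi(G_2)) \ge 0$.
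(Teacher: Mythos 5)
Your proposal is correct and follows essentially the same argument as the paper: induction on the number of vertices, the same two cases at the root, and in the branch-vertex case the same bounds on $\lambda_G$ obtained from the inductive hypotheses $V_{G_i}\le 1$ and $V_{G_i}\ge\pi(G_i)$, with both comparisons reducing to $(1-\pi(G_1))(1-\pi(G_2))\ge 0$ exactly as in the paper's chain of inequalities. (Your worry about $\pi(G_i)=1$ is moot, since every branch contains a leaf and leaves have $p_v<1$, but it does no harm.)
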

\begin{proof}
We prove this by induction on the number of vertices of $G$. It is clearly true when $G$ has only one vertex, so suppose $G$ has more than one vertex and that the lemma is true for trees with fewer vertices than $G$. 

First suppose that $O$ is not a branch vertex and let $O'$ be the neighbor of $O$. Then by the induction hypothesis, $1\ge V_{G'} \ge \pi(G')$, so $V_G \equiv p(O) V_{G'} \ge p(O) \pi(G') = \pi(G)$. Also, clearly $V_G \le p(O) \le 1$.

Now suppose that $O$ is a branch vertex, and let $G_1$ and $G_2$ be the two branches. Then by the induction hypothesis, $V_{G_1} \ge \pi(G_1)$ and $V_{G_2} \ge \pi(G_2)$, and it follows that
\begin{align*}
\lambda(G) &\ge \left( \frac{1- \pi(G_1)}{\pi(G_1)} + \frac{1-\pi(G_2)}{\pi(G_2)} \right)^{-1} \\
& =   \frac{\pi(G_1)\pi(G_2)}{ \pi(G_2)(1- \pi(G_1)) + \pi(G_1)(1 - \pi(G_2))} \\
& \ge \frac{\pi(G_1)\pi(G_2)}{ 1- \pi(G_1) + \pi(G_1)(1 - \pi(G_2)) } \\
& = \frac{\pi(G_1)\pi(G_2)}{ 1 - \pi(G_1) \pi(G_2) }. 
\end{align*}
It follows that $V_G \equiv p_O \lambda(G)(1- \pi(G_1) \pi(G_2)) = \pi(G)$.

Also, by the induction hypothesis, $V_{G_1} \le 1$ and $V_{G_2} \le 1$, and it follows from the definition of $\lambda(G)$ that
\begin{align*}
\lambda(G) &\le (1- \pi(G_1)+ 1- \pi(G_2) )^{-1} \\
& \le (1 - \pi(G_1) + \pi(G_1)(1 - \pi(G_2)))^{-1} \\
& = (1 - \pi(G_1)\pi(G_2))^{-1}.
\end{align*}
If follows that $V_G \equiv p_O \lambda(G)(1- \pi(G_1)\pi(G_2)) \le 1$.
\end{proof}

\begin{definition}[Tree searching strategy] The tree searching strategy $s_G$ is a probabilistic choice of depth-first searches of a tree $G$, and is fully described by specifying which branch is searched first at each branch vertex. Suppose $v$ is such a vertex, and let $G_1$ and $G_2$ be the two branches. Then $G_1$ is searched first with probability
\[
q_{G_1} =  \lambda(G) \left(\frac{1}{V_{G_1}} - \frac{\pi(G_2)}{V_{G_2}} \right), 
\]
otherwise $G_2$ is searched first.
\end{definition}
It is easy to check that $q_{G_1}+q_{G_2} =1$, so to check that $q_{G_1}$ and $q_{G_2}$ are well defined probabilities, we just need to verify that they are both non-negative. To see that $q_{G_1}$ is non-negative, note that by Lemma~\ref{lem:Vlb}, we have $1/V_{G_1} \ge 1$ and $\pi(G_2)/V_{G_2} \le 1$. Similarly for $q_2$.

Before proving the tree searching strategy guarantees an expected payoff of at least $V_G$ for the Searcher, we illustrate the calculation of the probabilities that define the strategy by considering the tree in Figure~\ref{fig:tree}. The tree searching strategy is specified by the probability $q_{G(A)}$ of searching $G(A)$ before $G(D)$ and the probability $q_{G(B)}$ of searching $G(B)$ before $G(C)$. The first probability is
\[
q_{G(A)} =  \lambda(G) \left(\frac{1}{V_{G(A)}} - \frac{\pi(G(D))}{V_{G(D)}} \right) = \left(\frac{10}{59} \right)\left( \frac{1}{2/3} - \frac{(1/3)(1/2)(3/5)}{1/6} \right) = \frac{9}{59}.
\]
The second probability is
\[
q_{G(B)} = \lambda(G(D)) \left(\frac{1}{V_{G(B)}} - \frac{\pi(G(C))}{V_{G(C)}} \right) = \left( \frac{1}{3} \right) \left( \frac{1}{1/3} - \frac{1/2}{1/2} \right) = \frac{2}{3}.
\]
\begin{lemma} \label{lem:Sbound}
The tree searching strategy $s_G$ ensures an expected payoff of at least $V_G$ against any Hider pure strategy.
\end{lemma}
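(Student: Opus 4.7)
The plan is to prove $E_G(v)\ge V_G$ for every leaf $v$ by induction on the number of vertices of $G$, where $E_G(v)$ denotes the expected payoff of $s_G$ against the pure Hider strategy of placing the target at $v$. The base case (single vertex) is trivial. For the inductive step I split into the two cases in the definition of $s_G$. If $O$ has a unique branch $G'$, then $s_G$ deterministically searches $O$ first and then uses $s_{G'}$, so $E_G(v)=p_O E_{G'}(v)\ge p_O V_{G'}=V_G$ by the inductive hypothesis.

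The substantive case is when $O$ is a branch vertex with branches $G_1,G_2$. Assume without loss of generality that $v\in G_1$. By construction $s_G$ first searches $O$, then with probability $q_{G_1}$ performs $s_{G_1}$ followed by a depth-first search of $G_2$, and with probability $q_{G_2}$ does the reverse. In the first case the Searcher reaches $v$ after surviving $O$ together with the vertices of $s_{G_1}$ up to $v$, contributing $p_O E_{G_1}(v)$ to the expected payoff; in the second case she must additionally survive all of $G_2$, contributing $p_O\pi(G_2)E_{G_1}(v)$. Hence
\[
E_G(v)=p_O E_{G_1}(v)\bigl(q_{G_1}+q_{G_2}\pi(G_2)\bigr).
\]

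The heart of the proof is then the algebraic identity
\[
q_{G_1}+q_{G_2}\pi(G_2)=\frac{\lambda_G\bigl(1-\pi(G_1)\pi(G_2)\bigr)}{V_{G_1}},
\]
which I would verify by substituting the definitions of $q_{G_1}$ and $q_{G_2}$; the cross terms $\pi(G_2)/V_{G_2}$ cancel, leaving exactly the right-hand side. Combined with the inductive bound $E_{G_1}(v)\ge V_{G_1}$, this yields $E_G(v)\ge p_O\lambda_G(1-\pi(G_1)\pi(G_2))=V_G$, completing the induction. The symmetric case $v\in G_2$ is analogous.

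The only real obstacle is making sure the recursive conditioning is written correctly — in particular, accounting for the fact that when branch $G_1$ is searched second, the target can only be reached after surviving the entire traversal of $G_2$ — and then carrying out the small cancellation that shows the probabilities $q_{G_1},q_{G_2}$ were chosen precisely so that the bound saturates. Notably, this mirrors the saturation already observed in Lemma~\ref{lem:value} for the Hider side, which is reassuring: together the two lemmas give $V_G$ as both a floor for the Searcher and a ceiling for the Hider, so $V_G$ is the value of the game.
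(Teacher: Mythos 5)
Your proof is correct and follows essentially the same route as the paper: induction on the number of vertices, with the branch-vertex case reducing to the cancellation $q_{G_1}+q_{G_2}\pi(G_2)=\lambda_G\bigl(1-\pi(G_1)\pi(G_2)\bigr)/V_{G_1}$, which is exactly the computation in the paper's proof (the paper merely applies the inductive bound $E_{G_1}(v)\ge V_{G_1}$ before doing the algebra rather than after). The only nit is that you quantify over leaves while the lemma quantifies over all pure strategies; the identical induction also covers internal vertices (including $v=O$, where the payoff is $p_O\ge V_G$ by Lemma~\ref{lem:Vlb}), or one can note that in a depth-first search every internal vertex is dominated pathwise by any leaf below it.
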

\begin{proof}
The proof is by induction on the number of vertices of $G$ (it is obviously true for graphs with one vertex). Suppose that $G$ has more than one vertex, and first suppose that $O$ is not a branch vertex, that $O'$ is the vertex adjacent to $O$ and $G'$ is the tree rooted at $O'$. Then by induction, the expected payoff is at least $p_{O'}V_{G'} \equiv V_G$.

Now suppose that $O$ is a branch vertex, and let $G_1$ and $G_2$ be the two branches of $G$, so that depth-first searches of these subtrees are performed in some order.  Suppose, without loss of generality, that the Hider is located in $G_1$. Then, by induction, the expected payoff satisfies
\begin{align*}
P(s) & \ge q_{G_1} V_{G_1} + q_{G_2} \pi(G_2) V_{G_1} \\
& =  \lambda(G) \left(\frac{1}{V_{G_1}} - \frac{\pi(G_2)}{V_{G_2}} \right) V_{G_1} +  \lambda(G) \left(\frac{1}{V_{G_2}} - \frac{\pi(G_1)}{V_{G_1}} \right) \pi(G_2)V_{G_1} \\
& = \lambda(G)(1- \pi(G)) \equiv V_G.
\end{align*}
This completes the proof.
\end{proof}

\begin{theorem}
The value of the game is $V_G \equiv \lambda(G) (1- \pi(G))$. An optimal strategy for the Hider is the tree hiding strategy $h_G$ and an optimal strategy for the Searcher is the tree searching strategy $s(G)$.
\end{theorem}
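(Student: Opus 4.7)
The plan is to assemble the theorem directly from Lemmas~\ref{lem:DF} and~\ref{lem:Sbound} together with the minimax theorem of~\cite{Neumann}. First I would observe that Lemma~\ref{lem:DF}(ii) already tells us that when the Hider plays $h_G$, the expected payoff against any pure Searcher strategy is at most $V_G$. Since the expected payoff $P(h_G, s)$ is linear in the Searcher's mixed strategy $s$ (a convex combination over pure expanding searches), this upper bound extends to every mixed Searcher strategy, giving $\max_s P(h_G, s) \le V_G$.

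Next I would run the same linearity argument in reverse for the Searcher: Lemma~\ref{lem:Sbound} shows that $s_G$ guarantees expected payoff at least $V_G$ against any Hider pure strategy (a single vertex), and since the expected payoff is linear in the Hider's mixed strategy, this gives $\min_h P(h, s_G) \ge V_G$. Combining these two inequalities with the minimax theorem immediately yields that the value of the game equals $V_G$ and that $h_G$ and $s_G$ are optimal for the Hider and Searcher respectively.

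The closed-form expression $V_G = \lambda(G)(1-\pi(G))$ appearing in the theorem statement is just the Case~2 evaluation of the recursive definition of $V_G$ (with the boundary and Case~1 sub-cases absorbed into the recursion), so nothing beyond unwinding the recursion is needed to match the stated formula.

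I do not anticipate any substantial obstacle. All the real work---the recursive construction of $h_G$ and $s_G$, the depth-first best-response property that forces the two branch indices to agree, and the inductive verification of the payoff bounds $V_G$ from each side---has already been done in Lemmas~\ref{lem:value}, \ref{lem:DF}, and~\ref{lem:Sbound}. The theorem itself is essentially a bookkeeping step tying these inequalities together through the minimax theorem.
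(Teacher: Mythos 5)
Your proposal is correct and follows essentially the same route as the paper: the paper's proof likewise just combines Lemma~\ref{lem:DF}(ii) (upper bound via $h_G$) with Lemma~\ref{lem:Sbound} (lower bound via $s_G$) to pin down the value, with the linearity/minimax bookkeeping left implicit. No substantive differences.
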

\begin{proof}
Suppose the Hider uses the strategy $h_G$. Then, by Lemma~\ref{lem:DF}, the expected payoff is at most~$V_G$, so this is an upper bound for the value of the game. 

On the other hand, if the Searcher uses the strategy $s_G$ then by Lemma~\ref{lem:Sbound}, the expected payoff is at least $V_G$, and it follows that the value is at least $V_G$.

Putting together these two bounds on the value, the theorem follows.
\end{proof}

\section{Conclusion}
We have introduced a new search game to model search and rescue operations in which there is a threat the Searcher will be captured herself. We solved the game in which an arbitrary number of targets must be captured, and in the case of one target we solved the game when played on the vertices of a graph. There are many open questions and variations of the game that must be left to future work. What is the solution to the game for $k>1$ on trees, or for $k=1$ on other classes of graphs? What happens if we relax the assumption that the Searcher starts at a fixed vertex, as in \cite{BK13, BK15}? We could also consider a variation of the game in which there is more than one Searcher, similarly to the scheduling problem in~\cite{Agnetis}. This paper scratches the surface, but we believe this opens up an interesting and potential fruitful avenue of new study.

\section*{Acknowledgements}
This material is based upon work supported by the National Science Foundation under Grant No. IIS-1909446.


\begin{thebibliography}{999999999999999999999999999}        		

\bibitem[Agnetis {\em et al.}(2009)]{Agnetis} Agnetis A, Detti P, Pranzo M, Sodhi MS (2009) Sequencing unreliable jobs on parallel machines. {\em J. Scheduling} 12(1):45--54.

\bibitem[Alpern(2010)]{Alpern10} Alpern S (2010) Search games on trees with asymmetric travel times. {\em SIAM J. Control Optim.} 48(8):5547--5563.

\bibitem[Alpern(2019)]{Alpern19} Alpern S (2019) Search for an immobile Hider in a known subset of a network. {\em Theor. Comput. Sci.} 794:20--26.

\bibitem[Alpern {\em et al.}(2008)]{ABG08} Alpern S, Baston V, Gal S (2008) Network search games with immobile hider, without a designated searcher starting point. {\em Int. J. Game Theory} 37(2):281--302.

\bibitem[Alpern and Gal(2003)]{AG03} Alpern S, Gal S (2003) {\em The Theory of Search Games and Rendezvous}. Kluwer International Series in Operations Research and Management Science (Kluwer, Boston), 319

\bibitem[Alpern and Lidbetter(2013)]{AL13} Alpern S, Lidbetter T (2013) Mining coal or finding terrorists: the expanding search paradigm. {\em Oper. Res.} 61(2):265--279.

\bibitem[Alpern and Lidbetter(2015)]{AL15} Alpern S, Lidbetter T (2015) Optimal trade-off between speed and acuity when searching for a small object. {\em Oper. Res.} 63(1):122--133.

\bibitem[Alpern and Lidbetter(2019)]{AL18} Alpern S, Lidbetter T (2019) Approximate solutions for expanding search on general networks. {\em Ann. Oper. Res.} 275(2):259--279.

\bibitem[Angeloupolous {\em et al.}(2019)]{ALD19} Angelopoulos S, D\"{u}rr C, Lidbetter T (2019) The expanding search ratio of a graph. {\em Discrete Appl. Math.}, https://doi.org/10.1016/j.dam.2019.01.039.

\bibitem[Baston and Kikuta(2013)]{BK13} Baston V, Kikuta K (2013) Search games on networks with travelling and search costs and with arbitrary searcher starting points. {\em Networks} 62(1):72--79.

\bibitem[Baston and Kikuta(2015)]{BK15} Baston V, Kikuta K (2015) Search games on a network with travelling and search costs. {\em Int. J. Game Theory} 44(2):347--365.

\bibitem[Bellman(1957)]{Bellman} Bellman R (1957) {\em Dynamic Programming}, Princeton University Press, Princeton, NJ.

\bibitem[Bertsimas and Ni\~{n}o-Mora(1996)]{Bertsimas} Bertsimas D, Ni\~{n}o-Mora J (1996) Conservation laws, extended polymatroids and multiarmed bandit problems; a polyhedral approach to indexable systems. {\em Math. Oper. Res.} 21(2):257--306.

\bibitem[Dagan and Gal(2008)]{DG08} Dagan A, Gal S (2008) Network search games, with arbitrary searcher starting point. {\em Networks} 52(3):156--161.

\bibitem[Fokkink {\em et al.}(2019)]{FLV19} Fokkink R, Lidbetter T and V\'{e}gh L (2019) On submodular search and machine scheduling. {\em Math. Oper. Res.} (in press).


\bibitem[Gal(1979)]{Gal79} Gal S (1979) Search games with mobile and immobile hider. {\em SIAM J. Control Optim.}, 17(1):99--122.

\bibitem[Gal(2001)]{Gal01} Gal S (2001) On the optimality of a simple strategy for searching graphs. {\em Int. J. Game Theory} 29(4):533--542.

\bibitem[Gal(2011)]{Gal11} Gal S (2011) {\em Search games}. Wiley Encyclopedia of Operations Research and Management Science, Cochran
JJ, Cox LA Jr, Keskinocak P, Kharoufeh JP, Smith JC ed., Wiley.

\bibitem[Gal and Cassas(2014)]{GC14} Gal S, Casas J (2014) Succession of hide-seek and pursuit-evasion at heterogeneous locations. {\em J. Roy. Soc. Interface} 11(94):20140062.

\bibitem[Hohzaki(2016)]{H16} Hohzaki R (2016) Search Games: Literature and Survey. {\em J. Oper. Res. Soc. Jpn.} 59(1):1--34.

\bibitem[Lidbetter(2013)]{Lidbetter13} Lidbetter T (2013) Search games with multiple hidden objects {\em SIAM J. Control Optim.} 51(4):3056--3074.

\bibitem[Lidbetter(2017)]{LidbetterEJOR} Lidbetter T (2013) On the approximation ratio of the Random Chinese Postman Tour for network search. {\em Eur. J. Oper. Res.} 263(3):782--788.

\bibitem[Lin and Singham(2016)]{LS16} Lin KY, Singham DI (2016) Finding a hider by an unknown deadline. {\em Oper. Res. Lett.} 44(1):25--32.

\bibitem[Monma and Sidney(1979)]{Monma} Monma CL, Sidney JB (1979) Sequencing with series-parallel precedence constraints. {\em Math. Oper. Res.} 4(3):215--224.

\bibitem[Smith(1956)]{Smith} Smith WE (1956) Various optimizers for single-stage production. {\em Naval Res. Logist. Quarterly} 3(1-2):59--66.

\bibitem[von Neumann(1928)]{Neumann} von Neumann J. (1928) 1928. Zur theorie der gesellschaftsspiele. {\em Mathematische annalen}, 100(1):295--320.

\end{thebibliography}
\end{document}